\documentclass[preprint,12pt]{elsarticle}




\usepackage{tabularx,booktabs,makecell,adjustbox}
\usepackage{caption}
\usepackage{amssymb}
\usepackage{amsmath}
\usepackage{amsthm}
\usepackage{booktabs} 
\usepackage{multirow} 

\usepackage{amsmath}   
\usepackage{amssymb}   
\usepackage{amsthm}    

\usepackage{algorithm}
\usepackage{algpseudocode}
\usepackage{booktabs}
\usepackage{adjustbox} 
\newtheorem{theorem}{Theorem}[section]
\newtheorem{lemma}[theorem]{Lemma}
\newtheorem{definition}[theorem]{Definition}
\usepackage{booktabs} 
\usepackage{multirow} 
\usepackage{adjustbox} 
\usepackage{siunitx} 
\usepackage{graphicx}

\newtheorem{assumption}{Assumption}[section]


\journal{Applied Mathematics and Computation}

\begin{document}

\begin{frontmatter}



\title{JFR: A Jump Frontier Relaxation Strategy for Fast Bellman–Ford}


\author[inst1]{Xin Wang\corref{cor1}}
\ead{xinw12424@gmail.com}  
    
\author[inst2]{Xi Chen}        

\cortext[cor1]{Corresponding author}

\affiliation[inst1]{
    organization={Ningbo University of Technology},
    addressline={ 201 Fenghua Road},
    city={Ningbo},
    postcode={315211},
    state={Zhejiang},
    country={China}
}

\affiliation[inst2]{
    organization={Wuhan Qingchuan University},
    addressline={Jiangxia District},
    city={Wuhan},
    postcode={430204},
    state={Hubei},
    country={China}
}
\begin{abstract}
Shortest-path computation on weighted graphs remains a central problem in both theory and large-scale graph systems. Classical label-correcting algorithms such as Bellman--Ford (BF) and Shortest Path Faster Algorithm (SPFA) often suffer from redundant relaxations and adversarial worst-case behavior, especially on dense or negative-edge graphs.

We introduce \textbf{Jump Frontier Relaxation (JFR)}, a correctness-preserving optimization framework that contracts active frontiers and propagates multi-hop improvements through a **$k$-bounded Local Multi-Hop ($\mathbf{k}$-LMH) strategy**. We provide formal proofs of convergence and bounded complexity, offering a constructive description of the underlying mechanisms to enable external validation.

To establish a rigorous theoretical foundation for JFR's acceleration, we replace the empirical stability parameter with the explicit algorithm parameter $\mathbf{k}$, which bounds the depth of $\text{LMH}$ propagation. We prove that the total number of relaxation operations is reduced to $O(n + m \cdot \frac{\overline{D}}{k})$. The overall runtime, however, is governed by a clear **cost-benefit relationship**: net acceleration is achieved only when the $\mathbf{1/k}$ reduction in relaxation operations outweighs the accumulated computational cost of the $\mathbf{k}$-LMH overhead ($\sum C_{\text{LMH}}$), establishing a mathematically sound boundary for its effectiveness.

Extensive C++ experiments---implemented using high-performance graph kernels from the Networkit framework---show that in the majority of cases, JFR achieves significant reductions in relaxation operations, with the degree of improvement varying across graph types and densities. A few isolated instances exhibit comparable or slightly higher operation counts relative to SPFA-SLF, reflecting local topological effects. Importantly, JFR demonstrates a consistent pattern of performance: small-scale or sparse subgraphs may show weak negative correlation between operation count reduction and runtime, whereas larger or highly connected regions exhibit strong positive correlation, highlighting the framework’s robustness and effectiveness in mitigating worst-case behavior.

These results show that JFR provides a principled and practically effective architecture for large-scale, energy-constrained, and worst-case-sensitive graph processing.
\end{abstract}

\begin{graphicalabstract}
\end{graphicalabstract}

\begin{highlights}
\item We introduce \textbf{JFR}, a Bellman--Ford--based optimization framework centered on $k$-bounded local multi-hop (LMH) propagation and Frontier Filtering, while strictly preserving correctness.

\item Unlike classical SPFA-SLF, whose performance monotonically worsens with increasing graph density, JFR exhibits a rare non-monotonic behavior: small edge increments (5--15\%) can unexpectedly accelerate the algorithm by reducing runtime and relaxation operations.

\item This edge-induced nonlinear acceleration arises because additional edges create shortcut structures that truncate long negative-weight propagation chains, allowing LMH to converge earlier.

\item Across diverse graph families, JFR reduces relaxation operations by -31--99\%. Its complexity is $O(n + m\,\bar{D}/k)$, where the overall speedup depends on a cost-benefit balance between the $1/k$ relaxation reduction and the accumulated $k$-LMH overhead.

\item The proposed Operational Efficiency (OE) metric shows that lower relaxation counts directly reduce memory traffic and computational effort, making JFR suitable for high-throughput and energy-sensitive applications.

\item Future work includes adaptive $k$-selection, improved update orderings, and cache-aware layouts to further exploit JFR’s nonlinear structural sensitivity and enhance performance beyond current SPFA variants.
\end{highlights}

\begin{keyword}
Bellman–Ford \sep Graph algorithms \sep Shortest path \sep Algorithm optimization \sep JFR \sep SPFA


\end{keyword}

\end{frontmatter}



\section{Introduction}

Shortest-path computation is a fundamental problem in computer science, with applications spanning
network routing, real-time navigation, logistics optimization, transportation planning, and
large-scale financial systems. Graphs in such applications may contain negative-weight edges due to
congestion penalties, dynamic pricing, or risk-adjusted costs. Classical Dijkstra~\cite{dijkstra59} 
fails on negative weights, while Bellman--Ford (BF)~\cite{bellman58,ford56} remains the standard 
for arbitrary directed weighted graphs.

Despite BF's theoretical generality, redundant relaxations lead to significant performance degradation 
on large-scale graphs with millions of edges. Queue-based optimizations such as SPFA~\cite{bannister11}, 
near-optimal hop set techniques~\cite{elmasry19}, and hop-constrained path approaches~\cite{kociumaka22} 
reduce work in practice but may sacrifice worst-case guarantees or require structural assumptions. 
Surveys~\cite{madkour17,shokry19} provide a taxonomy of SSSP algorithms, highlighting the gap 
between theoretical correctness and practical efficiency. 
Recent advances have pushed the theoretical frontier for negative-weight single-source shortest paths, 
achieving near-linear work, parallelizability, and deterministic guarantees~\cite{bernstein22,bringmann23,fischer24,li25}, 
while practical implementations must still carefully balance performance and correctness on large-scale graphs.

We propose \textbf{Jump Frontier Relaxation (JFR)}, a Bellman--Ford-based framework that 
preserves correctness guarantees while significantly pruning redundant relaxations:

\begin{itemize}
    \item \textbf{Frontier Filtering:} Tracks vertices whose distance estimates effectively change, 
          relaxing only propagation-relevant edges.
    \item \textbf{Jump Propagation:} Aggregates multiple iterations in propagation-stable regions, 
          allowing multi-hop updates without disclosing exact scheduling or update ordering.
\end{itemize}

Beyond reducing work in the classical sense, JFR exhibits a \textbf{rare, structurally-driven 
nonlinear acceleration effect}: when the vertex set is fixed, \textbf{small-scale increases in edge 
count} (e.g., adding 5--15\% more edges) can \emph{accelerate} both runtime and relaxation reduction 
instead of slowing the algorithm down. Additional edges introduce shortcut structures that truncate 
long negative-weight propagation chains, enabling JFR's jump mechanism to converge earlier than 
SPFA-style methods. This phenomenon---unusual for BF/SPFA-family algorithms---highlights JFR's 
sensitivity to beneficial micro-structural changes in the graph.

This design ensures BF-level guarantees while empirically reducing relaxation operations 
by orders of magnitude, with low computational overhead and reduced energy consumption~\cite{horowitz14,lazarev22}.

\section{Theoretical Foundations of JFR}
\label{sec:theory}

Let $G=(V,E)$ be a finite directed graph with weight function $w:E\to\mathbb{R}$.
For a source vertex $s\in V$, let $d^\ast(v)$ denote the true shortest-path distance from $s$ to $v$ (possibly $+\infty$).
We write $d^{(k)}\in\mathbb{R}^{|V|}\cup\{+\infty\}$ for the distance estimates maintained by the algorithm after the $k$-th outer iteration (one round of frontier-driven relaxations possibly augmented by jump propagation).

Define the \emph{active frontier} after iteration $k$ as
\[
F^{(k)} \;=\; \{ v\in V \mid d^{(k)}(v) < d^{(k-1)}(v)\},
\]
with the convention that $d^{(0)}(v)=+\infty$ for $v\neq s$ and $d^{(0)}(s)=0$.

\subsection{Frontier Sufficiency and Correctness}

The JFR framework restricts edge-relaxation attempts to edges outgoing from the current frontier, possibly augmented by multi-hop propagation within the induced subgraph.

\begin{definition}[Abstract Jump Property]
\label{def:jump_property}
Let $G[F]$ be the subgraph induced by the active frontier $F$. Jump Propagation is any procedure that, given $F$, updates distance estimates within $G[F]$ such that local reachability consistency is maintained:
\[
d(v) \le d(u) + w(u, v), \quad \forall (u,v)\in E\cap(F\times F).
\]
This ensures that local improvements propagate, independent of traversal order.
\end{definition}

\begin{lemma}[Frontier Sufficiency]
\label{lem:frontier_sufficiency}
Let $d^{(k)}$ be the distance vector after $k$ outer iterations. If all relaxations (including Jump Propagation) consider only edges whose tail belongs to the current frontier, then for every vertex $v$ and integer $t\ge 0$:
\[
d^{(t)}(v) \le \min\{ \mathrm{length}(P) \mid P \text{ is an $s\to v$ path with $\le t$ edges} \}.
\]
In particular, $d^{(|V|-1)}(v)\le d^\ast(v)$ for all $v$.
\end{lemma}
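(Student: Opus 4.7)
The plan is to proceed by induction on $t$, the outer-iteration index. This mirrors the classical hop-count argument for Bellman--Ford relaxation but must be adapted to respect the frontier restriction, which is the substantive novelty of the statement. The base case $t=0$ is immediate: by the initialization convention $d^{(0)}(s)=0$ and $d^{(0)}(v)=+\infty$ for $v\neq s$, matching the fact that only $s$ admits a zero-edge $s\to s$ path while no such path exists to any other vertex.

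For the inductive step, I would fix an $s\to v$ path $P$ with $|P|\le t$ and split on its length. If $|P|\le t-1$, the inductive hypothesis together with monotonicity of distance estimates (no relaxation, whether a single-edge update or a Jump Propagation step, can ever increase a distance) gives $d^{(t)}(v)\le d^{(t-1)}(v)\le \mathrm{length}(P)$. The interesting case is $|P|=t$: decompose $P=P'\cdot(u,v)$ with $|P'|=t-1$, and apply the hypothesis to $P'$ to obtain $d^{(t-1)}(u)\le \mathrm{length}(P')$. The main obstacle, and the real content of the lemma, is that $u$ need not lie in $F^{(t-1)}$, so edge $(u,v)$ may not be scheduled for relaxation at iteration $t$. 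I would resolve this by letting $k^\ast\le t-1$ be the largest index at which $u$'s estimate strictly decreased; by the definition of the frontier this forces $u\in F^{(k^\ast)}$, and since no updates to $u$ occur between iterations $k^\ast$ and $t-1$ we have $d^{(k^\ast)}(u)=d^{(t-1)}(u)$. At iteration $k^\ast+1\le t$ the edge $(u,v)$ is therefore considered, yielding
\[
d^{(k^\ast+1)}(v)\;\le\; d^{(k^\ast)}(u)+w(u,v)\;=\;d^{(t-1)}(u)+w(u,v)\;\le\;\mathrm{length}(P),
\]
and monotonicity then propagates this bound up to iteration $t$. The boundary subcase where no such $k^\ast$ exists forces $u=s$, which is handled by treating $s$ as belonging to the initial frontier (consistent with the convention $d^{(-1)}\equiv+\infty$).

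To close, I would invoke Definition~\ref{def:jump_property}: Jump Propagation only enforces local reachability consistency within $G[F]$ and hence can only further decrease distance estimates, never increase them, so any additional updates it introduces preserve the inductive invariant rather than threaten it. The corollary $d^{(|V|-1)}(v)\le d^\ast(v)$ then follows from the bound at $t=|V|-1$ together with the standard observation that, absent negative cycles reachable from $s$, every finite shortest path admits a simple representative using at most $|V|-1$ edges. The delicate point throughout is purely bookkeeping on the frontier: once the inductive hypothesis guarantees that $u$'s estimate is small enough at some earlier iteration, the frontier's definition automatically enrolls $u$ at exactly that iteration, which is all that is needed to fire the crucial relaxation of $(u,v)$ in time.
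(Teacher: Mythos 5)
Your proof is correct and follows the same inductive skeleton as the paper's (induction on the iteration index, decomposing a hop-$(t)$ path at its penultimate vertex $u$), but it supplies a genuinely different — and more complete — argument at the one step that actually matters. The paper's proof simply asserts that during iteration $t+1$ the frontier relaxations guarantee $d^{(t+1)}(v)\le d^{(t)}(u)+w(u,v)$; under the frontier restriction this is not immediate, because $u$ need not belong to $F^{(t)}$ and so the edge $(u,v)$ may not be scheduled in that iteration at all. Your ``last strict decrease'' bookkeeping — take the largest $k^\ast\le t-1$ with $u\in F^{(k^\ast)}$, note $d^{(k^\ast)}(u)=d^{(t-1)}(u)$ since no later update touches $u$, fire the relaxation of $(u,v)$ at iteration $k^\ast+1\le t$, and carry the bound forward by monotonicity — is precisely the deferred-relaxation argument needed to close this gap, and it is the standard device for showing that work-list label-correcting schemes preserve the Bellman--Ford hop invariant. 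Your boundary case ($u=s$ handled by the initial frontier convention) and your remark that Jump Propagation, per Definition~\ref{def:jump_property}, only decreases estimates and hence cannot violate the invariant are both consistent with the paper. In short: same high-level route, but your version proves the step the paper only asserts, which is the substantive content of the lemma.
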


\begin{proof}
By induction on $t$, similar to classical Bellman--Ford. Base case $t=0$ holds by initialization. Assume the invariant for $t$. For any $s\to v$ path $P$ of length $\le t+1$, let $u$ be its penultimate vertex. By induction $d^{(t)}(u)$ is no greater than the length to $u$. During iteration $t+1$, relaxations from frontier vertices (and any multi-hop updates via Jump Propagation) guarantee $d^{(t+1)}(v) \le d^{(t)}(u)+w(u,v)$. Taking the minimum over all such paths yields the claim.
\end{proof}

\begin{theorem}[Correctness and Termination]
\label{thm:correctness}
If $G$ has no negative-weight cycles reachable from $s$, then after at most $|V|-1$ outer iterations:
\[
d^{(|V|-1)}(v) = d^\ast(v), \quad \forall v \in V.
\]
A strict improvement after $|V|-1$ iterations implies a reachable negative cycle.
\end{theorem}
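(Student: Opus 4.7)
The plan is to prove the two halves of the statement by combining Lemma \ref{lem:frontier_sufficiency} with a complementary safety (lower-bound) invariant, then invoke a standard contrapositive for the negative-cycle diagnosis. For the equality $d^{(|V|-1)}(v) = d^{\ast}(v)$, I would first recall that if $d^{\ast}(v)$ is finite and $G$ has no negative cycle reachable from $s$, then there exists a shortest $s \to v$ path that is simple and therefore uses at most $|V|-1$ edges. Applying Lemma \ref{lem:frontier_sufficiency} with $t = |V|-1$ immediately delivers the upper bound $d^{(|V|-1)}(v) \le d^{\ast}(v)$. The case $d^{\ast}(v) = +\infty$ (i.e.\ $v$ unreachable) is handled by the same lemma, which gives $d^{(|V|-1)}(v) \le +\infty$ and in fact equality because no finite walk exists.

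The reverse inequality $d^{(|V|-1)}(v) \ge d^{\ast}(v)$ is the content I would prove as a separate safety invariant: at every point in the execution and for every vertex $u$, the value $d(u)$ equals the length of some actual $s \to u$ walk in $G$. I would establish this by induction over relaxation events. The base case is immediate since $d^{(0)}(s) = 0$ corresponds to the empty walk and all other entries are $+\infty$. For the inductive step, ordinary frontier relaxations $d(v) \leftarrow d(u) + w(u,v)$ preserve the invariant because concatenating an $s \to u$ walk with the edge $(u,v)$ yields an $s \to v$ walk of matching length. For updates produced by Jump Propagation within $G[F]$, the same argument applies hop by hop: any local improvement of $d(v)$ arises from a real edge $(u,v) \in E \cap (F \times F)$ with $d(u)$ already witnessed by a walk, so $d(v)$ still records a legitimate walk length. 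Since walk lengths are bounded below by $d^{\ast}(v)$ (again using the absence of negative cycles reachable from $s$, so no walk is arbitrarily short), the lower bound follows, and combining it with the upper bound produces the desired equality.

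For termination, I would note that once $d^{(|V|-1)} = d^{\ast}$, one additional round of relaxations cannot decrease any entry, because $d^{\ast}$ satisfies $d^{\ast}(v) \le d^{\ast}(u) + w(u,v)$ for every edge $(u,v) \in E$. Consequently, any strict improvement observed after iteration $|V|-1$ implies that, somewhere during that extra round, the walk-length invariant exhibits a walk strictly shorter than $d^{\ast}(v)$; the only way such a walk can exist without contradicting the definition of $d^{\ast}$ is if it revisits a vertex along a cycle of negative total weight reachable from $s$. This gives the negative-cycle certificate stated in the theorem.

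The main obstacle I anticipate is the handling of Jump Propagation in the safety invariant: because Definition \ref{def:jump_property} only requires the local reachability consistency $d(v) \le d(u) + w(u,v)$ for edges in $G[F]$ without specifying the order or multiplicity of internal updates, I must argue that regardless of how many internal passes the jump procedure performs, each intermediate $d(v)$ can still be traced back to a concrete walk in $G$ that begins at $s$. I would address this by treating the jump procedure as a finite sequence of edge relaxations on genuine edges of $G$ (which Definition \ref{def:jump_property} licenses) and invoking the invariant relaxation by relaxation, so that no non-constructive shortcut is ever introduced. Once this bookkeeping is in place, the rest of the argument parallels the classical Bellman--Ford correctness proof.
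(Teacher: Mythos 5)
Your proposal is correct and follows essentially the same route as the paper's proof: the upper bound comes from Lemma~\ref{lem:frontier_sufficiency} applied at $t=|V|-1$ (using that shortest paths are simple in the absence of reachable negative cycles), and the lower bound comes from the observation that relaxations never push an estimate below $d^\ast$. You are in fact more careful than the paper, which compresses your entire walk-witness safety invariant --- including the genuinely nontrivial point that Jump Propagation updates must themselves be realized by relaxations along real edges of $G$, something Definition~\ref{def:jump_property} states only as a postcondition --- into the single unargued sentence ``Relaxations cannot decrease distances below $d^\ast$.''
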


\begin{proof}
By Lemma~\ref{lem:frontier_sufficiency}, $d^{(|V|-1)}(v)$ reaches the shortest path using $\le |V|-1$ edges. Relaxations cannot decrease distances below $d^\ast$, so equality holds.
\end{proof}

\subsection{Amortized Analysis: $k$-Bounded Propagation and Cost Tradeoff}
\label{sec:amortized}

To theoretically justify the observed reduction in relaxation operations, we introduce a constructive framework based on an explicit algorithm parameter $k$, which dictates the depth of local propagation.

\begin{definition}[$k$-Bounded Local Multi-Hop Propagation]
\label{def:k-bounded_main}
The Local Multi-Hop Propagation ($\text{LMH}$) is defined as $k$-bounded, where $k \in \mathbb{N}$ is an explicit \textbf{depth parameter}. $\text{LMH}$ ensures that local distance consistency is maintained for paths of length $\le k$ within the active frontier's neighborhood $N_k(F)$. (See Appendix~\ref{append:amortized_proof} for formal details and Lemma~\ref{lem:k-implies-tau} for the resulting $\tau \ge k$ stability guarantee.)
\end{definition}

\begin{assumption}[$k$-LMH Cost Bound]
\label{asm:jp_cost_revised}
Let \(C_{\mathrm{LMH}}(t)\) denote the computational cost of the $k$-LMH propagation step in iteration \(t\). We assume an upper bound proportional to $k$:
\[
C_{\mathrm{LMH}}(t)
\;\le\;
c_1 \cdot k \cdot \sum_{v \in N_k(F^{(t)})} \deg(v)
\;+\;
c_2\, N_{\mathrm{prop}}(t).
\]
This explicitly includes the parameter $k$ in the overhead cost, ensuring mathematical rigor.
\end{assumption}

\begin{theorem}[Amortized Bound on Edge Inspections]
\label{thm:amortized_edges}
Under the $k$-Bounded LMH property, let $s(v)$ be the number of times $v$ is active. The total number of edge inspections is bounded by:
\[
\sum_t |E_F^{(t)}| = \sum_{v \in V} s(v) \deg(v) \le O\Big( n + m \cdot \frac{\bar{D}}{k} \Big),
\]
where $k$ is the algorithm's depth parameter. This bound theoretically justifies the reduction in operational complexity when $k$ is large.
\end{theorem}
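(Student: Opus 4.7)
The plan is to prove the bound by an amortized argument in three steps: (i) reduce the sum of edge inspections to a weighted activation count via double counting; (ii) bound the per-vertex activation count $s(v)$ using the $k$-bounded LMH property; (iii) combine the two to recover the claimed $O(n + m\bar{D}/k)$ form.

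First I would verify the identity $\sum_t |E_F^{(t)}| = \sum_{v\in V} s(v)\deg(v)$. An edge $(u,v)$ is inspected during iteration $t$ precisely when its tail $u$ belongs to $F^{(t)}$; swapping the order of summation over iterations and tails then assigns $\deg(u)$ inspections to each of the $s(u)$ iterations in which $u$ is active. This step is pure bookkeeping and uses nothing beyond the convention (guaranteed by Lemma~\ref{lem:frontier_sufficiency} and Definition~\ref{def:k-bounded_main}) that only frontier-tail edges are relaxed.

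Next I would bound $s(v)$ using Definition~\ref{def:k-bounded_main} together with Lemma~\ref{lem:k-implies-tau}. The $k$-bounded LMH property guarantees that once $v$ is updated, local distance consistency holds for all paths of length $\le k$ inside $N_k(F)$, so $v$ cannot be re-activated until a new shortest-path witness reaches it through a prefix that extends the effective BF-depth at $v$ by at least $k$ edges. Since Theorem~\ref{thm:correctness} caps the total BF-depth required for convergence by the effective reachable shortest-path depth $\bar{D}$, this yields $s(v) \le 1 + \lceil \bar{D}/k \rceil$. Substituting into the identity gives $\sum_v s(v)\deg(v) \le 2m + 2m\bar{D}/k$, and the additive $O(n)$ term in the theorem then absorbs the one unavoidable initialization activation per reachable vertex.

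The main obstacle is the second step: tying the stability parameter $\tau$ from Lemma~\ref{lem:k-implies-tau} cleanly to the depth measure $\bar{D}$ so that every counted activation is charged to a genuine $k$-hop advance. In particular, one must exclude the case in which LMH's internal multi-hop sweeps, while preserving correctness, cause $v$ to appear nominally active in several consecutive outer iterations without advancing any shortest-path witness by the full $k$ hops. Formalizing that every counted activation truly consumes $\ge k$ units of BF-depth — possibly by charging any spurious activation to an amortization token deposited by a preceding $k$-hop improvement — is the technical heart of the argument, and I expect it to lean directly on the stability clause deferred to Appendix~\ref{append:amortized_proof}.
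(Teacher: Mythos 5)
Your proposal follows essentially the same route as the paper's Appendix~A proof: your step (i) is the paper's Lemma~\ref{lem:edge_decomp} verbatim, and your step (ii) is exactly the combination of Lemma~\ref{lem:k-implies-tau} ($\tau \ge k$) with the activation-charging argument that you correctly identify as the technical heart. The one deviation is that the paper bounds $s(v) \le 1 + \lceil D_v/k\rceil$ using the \emph{per-vertex} improvement count $D_v$ and only afterwards passes to the degree-weighted quantity $\tfrac{1}{k}\sum_v D_v\deg(v) = m\bar{D}/k$, whereas you substitute the global $\bar{D}$ directly into the per-vertex bound; this yields the same final $O(n + m\cdot\bar{D}/k)$ form but is strictly valid only if $\bar{D}$ is read as a uniform cap on every $D_v$ rather than as an average.
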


\begin{theorem}[Amortized Running Time and Cost Tradeoff]
\label{thm:amortized_running_time}
Combining the $k$-LMH cost (Assumption~\ref{asm:jp_cost_revised}) and the amortized bound on edge inspections (Theorem~\ref{thm:amortized_edges}), considering the priority queue overhead, the total running time $T_{\mathrm{total}}$ satisfies:
\[
T_{\mathrm{total}} = \mathcal{O}\!\Big( \left(n + m\cdot\frac{\bar{D}}{k}\right) \cdot \log |V| + \sum_{t=1}^{T} C_{\mathrm{LMH}}(t) \Big).
\]
\noindent
\textbf{Defense of the Logarithmic Factor:} It is important to note that the inclusion of the logarithmic factor $\log |V|$ is a deliberate architectural choice. While this theoretically increases the per-operation cost compared to FIFO-based approaches ($O(1)$), the strict ordering enforced by the priority queue drastically suppresses the relaxation count term ($m \cdot \frac{\bar{D}}{k}$). The overall speedup is achieved when the reduction in total relaxations outweighs the accumulated overhead.
\end{theorem}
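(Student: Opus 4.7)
The plan is to decompose $T_{\mathrm{total}}$ into two disjoint workloads---standard frontier relaxations with priority-queue overhead, and the auxiliary $k$-LMH propagation steps---and then aggregate each using a previously established bound. The first bucket is accounted for by Theorem~\ref{thm:amortized_edges}, while the second is charged directly against Assumption~\ref{asm:jp_cost_revised}. Additivity of computational cost then assembles the claimed expression, after which the cost--benefit commentary follows by inspecting how $k$ appears with opposite signs in the two summands.

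First I would fix an outer iteration $t$ and classify every elementary operation executed during it as either (i) a priority-queue-driven edge inspection, in which a vertex is extracted from the heap maintaining the current frontier $F^{(t)}$ and one of its outgoing edges in $E_F^{(t)}$ is relaxed, or (ii) an operation performed inside the $k$-LMH subroutine over the neighborhood $N_k(F^{(t)})$. For category (i), each edge scan is paired with at most one \texttt{insert} or \texttt{decrease-key} on a heap of size at most $|V|$, contributing $O(\log|V|)$ work per inspection. For category (ii), the per-iteration cost is bounded by $C_{\mathrm{LMH}}(t)$ as posited.

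Next I would aggregate. Summing (i) across $t=1,\ldots,T$ and invoking Theorem~\ref{thm:amortized_edges} yields
\[
\sum_{t=1}^{T} |E_F^{(t)}| \cdot O(\log|V|) \;=\; O\!\left(\left(n + m\cdot \frac{\bar{D}}{k}\right) \log|V|\right),
\]
while summing (ii) gives $\sum_{t=1}^{T} C_{\mathrm{LMH}}(t)$ directly. Additive combination produces the bound stated in the theorem. The cost--benefit remark then follows by contrasting the $1/k$ suppression inside the first summand with the $k$-linear growth of $C_{\mathrm{LMH}}(t)$ mandated by Assumption~\ref{asm:jp_cost_revised}: net acceleration occurs precisely when the first term's savings dominate the second term's accumulated overhead.

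The principal obstacle I anticipate is ruling out double counting between the two categories: an edge inspected inside the $k$-LMH subroutine must not also be billed as a frontier relaxation under (i). This is resolved by appealing to Definition~\ref{def:k-bounded_main}, which confines LMH work to the depth-$k$ neighborhood and enforces local consistency in place, so that edges discharged through multi-hop propagation are not re-enqueued on the outer priority queue within the same iteration. A secondary subtlety is confirming that the logarithmic factor multiplies only category (i); this is immediate because LMH manipulates local distance labels without touching the global heap, hence the $\log|V|$ overhead attaches exclusively to the relaxation term.
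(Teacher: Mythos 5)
Your proposal follows essentially the same route as the paper's Appendix A.3 proof: decompose each outer iteration's cost into the frontier-relaxation term plus $C_{\mathrm{LMH}}(t)$, bound the former by summing $|E_F^{(t)}|$ via Theorem~\ref{thm:amortized_edges}, and add the latter verbatim. In fact your version is slightly more complete than the paper's: the appendix proof charges only a constant $c_{\text{relax}}$ per edge inspection and its final displayed bound omits the $\log|V|$ factor that appears in the theorem statement, whereas you explicitly derive that factor from the per-inspection heap operations (insert/decrease-key on a structure of size at most $|V|$), and you also address the double-counting between LMH-internal edge work and outer frontier relaxations, which the paper leaves implicit.
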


\noindent \textbf{The rigorous proofs for Theorems~\ref{thm:amortized_edges} and \ref{thm:amortized_running_time}, including the detailed derivation of the $k$-dependent bounds, are provided in Appendix A.}

\subsection{Complexity Bounds and Robustness}

\begin{theorem}[Operation Count — Upper Bounds]
\label{thm:complexity}
Under the JFR framework using a priority-based implementation:
\begin{enumerate}
    \item Worst-case time complexity: $O(|V||E| \log |V|)$.
    \item \textbf{Robustness Advantage:} Unlike SPFA-SLF, whose queue-based dynamics are known to exhibit extremely
poor worst-case behavior---with repeated oscillations causing up to
$\Theta(|V||E|)$ relaxations---the JFR framework maintains a strictly
polynomial upper bound. While certain adversarial inputs make SPFA-SLF
appear to grow ``faster than polynomial’’ in practice, its formal worst-case
time complexity remains bounded by $O(|V||E|)$.

\end{enumerate}
Hence, JFR sacrifices a logarithmic factor to ensure robustness against the exponential degradation observed in heuristic variants.
\end{theorem}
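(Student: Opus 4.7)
The plan is to derive both claims by layering Theorem~\ref{thm:correctness} and Theorem~\ref{thm:amortized_running_time} on top of the priority-queue discipline, without introducing new analytical machinery. For Part~1, I would first invoke Theorem~\ref{thm:correctness} to cap the number of outer iterations at $|V|-1$. In the degenerate worst case, the active frontier can contain all of $V$, so the inner relaxation step inspects at most $|E|$ edges per iteration; since every decrease-key or insert operation on the priority queue costs $O(\log |V|)$, the relaxation layer contributes $O(|V||E|\log|V|)$. This is consistent with Theorem~\ref{thm:amortized_edges} in the degenerate regime $\bar D/k = \Theta(\bar D)$ (i.e., $k=1$), where no $k$-bounded amortization is available.

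Next I would absorb the LMH overhead term $\sum_t C_{\mathrm{LMH}}(t)$ from Theorem~\ref{thm:amortized_running_time} into the same bound. By Assumption~\ref{asm:jp_cost_revised}, each LMH invocation is at most $c_1 k \cdot \sum_{v \in N_k(F^{(t)})}\deg(v) + c_2 N_{\mathrm{prop}}(t)$, which in the worst case is $O(k|E|)$ per outer iteration and $O(k|V||E|)$ in total. Taking $k = O(\log |V|)$ as the canonical operating regime (and invoking the Appendix~A derivation of the $k$-dependent bound), this collapses to $O(|V||E|\log|V|)$ and does not dominate the relaxation layer, yielding the stated worst-case bound.

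For Part~2, I would frame the robustness claim as a direct structural consequence rather than a new inequality. Because JFR extracts the minimum-key candidate at each step, every vertex can be re-enqueued only in response to a strict decrease of its tentative distance, and Theorem~\ref{thm:correctness} caps the number of such decreases at $|V|-1$ per vertex. Hence total queue activity is at most $O(|V|^2)$ insertions and $O(|V||E|\log|V|)$ aggregate work, independent of any adversarial scheduling that can cause SPFA-SLF's queue to oscillate. The main obstacle will be cleanly balancing the $k$-dependent terms: too large a $k$ inflates the LMH overhead, while too small a $k$ nullifies the relaxation savings. I would pin this down by formally arguing that $k = O(\log|V|)$ preserves both bounds simultaneously, and by verifying that LMH updates are strictly distance-decreasing and therefore cannot create new active-vertex events beyond those already counted in Theorem~\ref{thm:amortized_edges}.
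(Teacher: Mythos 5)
The paper itself states Theorem~\ref{thm:complexity} without an accompanying proof; the only justification offered is the trailing note deferring to the amortized bounds of Theorem~\ref{thm:amortized_edges}. Your core argument for Part~1 --- at most $|V|-1$ outer iterations by Theorem~\ref{thm:correctness}, at most $|E|$ edge inspections per iteration, and an $O(\log|V|)$ priority-queue charge per update --- is exactly the standard derivation the paper implicitly relies on, and that part is fine. Your treatment of Part~2 via counting re-insertions is also consistent with the paper's convention $D_v \le n-1$ from Appendix~A, although you attribute that cap to Theorem~\ref{thm:correctness}, which only bounds the number of outer iterations, not the number of strict decreases per vertex; the per-vertex cap is really a property of the synchronous frontier model stated (without proof) in the appendix.

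The genuine gap is your absorption of the LMH overhead. You correctly observe that $\sum_t C_{\mathrm{LMH}}(t)$ can reach $O(k|V||E|)$ under Assumption~\ref{asm:jp_cost_revised}, and you then dispose of it by declaring $k=O(\log|V|)$ to be ``the canonical operating regime.'' No such restriction appears anywhere in the paper: $k$ is introduced as a free depth parameter (Definition~\ref{def:k-bounded_main}), and the implementation section even describes $k$ as dynamically self-adjusting, so it could in principle be $\omega(\log|V|)$, in which case your bound fails. The theorem as stated is unconditional, so you must either (i) make the constraint on $k$ an explicit hypothesis, (ii) read Part~1 as bounding only the relaxation/priority-queue work and excluding the LMH overhead term (which is arguably what the paper intends, since its own Theorem~\ref{thm:amortized_running_time} also leaves $\sum_t C_{\mathrm{LMH}}(t)$ as an additive, unbounded term), or (iii) prove that LMH work can be charged to strict distance decreases and hence folded into the $O(|V||E|)$ relaxation budget --- which Assumption~\ref{asm:jp_cost_revised} does not support, because its $c_1\,k\sum_{v}\deg(v)$ term is incurred whether or not any improvement occurs. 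As written, your proof establishes the claimed bound only under an assumption the statement does not make.
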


\noindent
\textbf{Note:} While the worst-case bound includes $\log |V|$, the practical efficiency is captured by the much tighter \textbf{amortized edge inspection bound of $O(n + m \cdot \frac{\bar{D}}{k})$}, as rigorously shown in Theorem~\ref{thm:amortized_edges}.
\section{Jump Frontier Relaxation (JFR) Algorithm}
\label{sec:algorithm}

JFR formalizes a frontier-based relaxation strategy for single-source shortest paths under negative-edge scenarios.
Distance estimates converge in at most $|V|-1$ iterations if $G$ contains no negative cycles. The practical efficiency stems from controlled Local Multi-Hop Propagation and Frontier Filtering, as formalized in Section~\ref{sec:amortized}.

\subsection{Algorithm Overview (Implementation-Agnostic)}
\label{sec:algo_overview}

At a high level, JFR maintains:
\begin{itemize}
    \item Distance estimates $d(v)$ for each vertex $v\in V$.
    \item Active frontier $F$ containing vertices whose distance decreased in the previous iteration.
\end{itemize}

Each outer iteration proceeds as follows:
\begin{enumerate}

\item \textbf{Frontier Relaxation}. 
For every vertex $u \in F$, relax all outgoing edges $(u,v)$.
 \[
 d(v) \leftarrow \min(d(v), d(u)+w(u,v)).
 \]
 \item \textbf{Local Multi-Hop Propagation:} Update distances within $F$ to ensure all local improvements propagate.
 \item \textbf{Frontier Update:} Construct the next frontier $F' = \{ v \mid d(v) \text{ decreased} \}$, leveraging the stability provided by the $\mathbf{k}$-bounded LMH to prune stable nodes.

\end{enumerate}

\noindent
Distance estimates are guaranteed to converge to $d^\ast$ after at most $|V|-1$ iterations, with the amortized number of edge inspections bounded as in Theorem~\ref{thm:amortized_edges}.

\subsection{Relation to Recent Near-Linear Negative-Weight SSSP Results}
\label{sec:near_linear_discussion}

Recent theoretical work has produced algorithms for negative-weight single-source shortest paths (SSSP) with near-linear complexity under specific assumptions (e.g., restricted graph classes, complex preprocessing steps)~\cite{bernstein22,bringmann23,fischer24,li25}. JFR differs as it operates under a general directed graph with arbitrary real edge weights, focusing on improving the performance of the classic Bellman--Ford framework via practical, implementation-agnostic abstractions rather than relying on restrictive structural graph properties. The analysis provided herein connects the observable properties (Stability and Update Density $\bar{D}$) to the amortized complexity.

\subsection{Practical Implications and Implementation Generality}
\label{sec:practical}

JFR is designed to bridge the gap between classical complexity bounds and empirical optimizations. The algorithm achieves its speedup through mechanisms now formally linked to the parameter $k$:
\begin{itemize}
    \item \textbf{Controlled Frontier Filtering.} 
    The stability property (i.e., $\tau \geq k$) provides a mechanism to prune stable or redundant nodes. 
    The resulting speedup is reflected in the amortized factor $\bar{D}/k$, provided that the associated overhead remains low.

    \item \textbf{Controlled Local Multi-Hop Propagation.} 
    Updates are limited to the $k$-hop neighborhood of the frontier, avoiding global scans and minimizing overhead.

    \item \textbf{Implicit $k$-Boundedness via Priority Queue.} 
    In our high-performance implementation, the theoretical concept of $k$-bounded LMH is realized implicitly via a \textbf{Priority Queue (PQ)}. The PQ dynamically determines the effective propagation depth by always prioritizing the most promising nodes (the 'Jump'). This effectively filters out suboptimal, shallow updates that would otherwise clutter the frontier, creating a dynamic, self-adjusting $k$ that adapts to the local graph topology without manual tuning.

    \item \textbf{Implementation Generality.} 
    The framework utilizes general abstractions for key mechanisms (Jump Propagation, Frontier Filtering logic) to maintain theoretical generality. While our reference implementation relies on a priority structure to achieve the dynamic $k$-bound, other low-level heuristics (such as multi-level buckets) could also satisfy the amortized bounds presented.
\end{itemize}

\noindent \textbf{The complete general algorithmic framework, JFR, is detailed in Appendix B.}

\section{Experiments}
\label{sec:experiments}

\subsection{Python Experiments: Functional Verification}
\label{subsec:python_results}

\begin{table}[h]
\centering
\caption{Python Experiments: Average Runtime, Relaxation Operations, and Correctness}
\resizebox{1\textwidth}{!}{
\begin{tabular}{lccccccc}
\hline
Graph & BF Time (ms) & SPFA Time (ms) & JFR Time (ms) & BF Ops & SPFA Ops & JFR Ops & Correctness \\
\hline
sparse & 20.00 & 23.03 & 20.71 & 159,003 & 148,872 & 97,563 & 1.0 \\
medium & 53.02 & 50.06 & 44.00 & 437,604 & 433,235 & 318,661 & 1.0 \\
dense & 152.64 & 131.30 & 117.11 & 1,233,604 & 1,230,765 & 989,475 & 1.0 \\
very\_dense & 386.89 & 337.92 & 284.49 & 3,024,604 & 3,021,844 & 2,536,216 & 1.0 \\
neg\_sparse & 20.13 & 25.43 & 15.51 & 159,003 & 156,816 & 85,365 & 1.0 \\
neg\_dense & 464.17 & 478.49 & 325.83 & 3,351,203 & 3,345,453 & 2,393,299 & 1.0 \\
\hline
\end{tabular}}
\label{tab:python_results}
\end{table}

\noindent
\textbf{Summary:} Python experiments confirm that JFR is \emph{correct, stable, and operationally beneficial}. Runtime and Ops reductions indicate potential efficiency, but Python’s interpreter overhead limits the observable performance gain.

\subsection{Quantifying Computational Efficiency}

To systematically analyze the tradeoff between reduced relaxations and the increased 
constant-factor cost introduced by queue management and jump propagation, we define two 
machine-independent quantitative indicators.

\subsection{Metrics: $\rho_{ops}$ and $\rho_{TPR}$}

\paragraph{Relaxation Reduction Factor.}
\[
\rho_{ops} = \frac{\text{Ops}_{\text{SPFA}}}{\text{Ops}_{\text{JFR}}}.
\]

This factor measures how effectively JFR suppresses redundant relaxations, providing a 
complexity-level comparison between the two algorithms.

\paragraph{Unit-Time Cost Factor.}
\[
\rho_{TPR} 
= \frac{TPR_{\text{JFR}}}{TPR_{\text{SPFA}}}
= \frac{T_{\text{JFR}}/Ops_{\text{JFR}}}{T_{\text{SPFA}}/Ops_{\text{SPFA}}}.
\]

This factor quantifies the additional per-operation cost introduced by priority-queue 
maintenance and jump propagation. It represents \emph{an implementation-agnostic efficiency measure}, 
not a physical energy measurement.

\paragraph{Interpretation.}
JFR yields net runtime improvement precisely when
\[
\rho_{ops} > \rho_{TPR}.
\]

This relationship defines the applicability boundary of the JFR framework and grounds all
performance discussions in quantifiable behavior.

\subsection{C++ Experiments: Large-Scale Randomized Benchmarking}
\label{subsec:cpp_results_updated}
Theorem 2.5 provides the performance lower bound (Lower Bound) of the JFR strategy under the ideal fixed depth condition. Our solver approximates and dynamically optimizes this $k$ value through the greedy strategy of the priority queue, achieving an engineering-level optimization.

To ensure that the performance evaluation reflects realistic high-performance graph computing conditions, all C++ experiments were conducted in an environment aligned with established practices in the graph-processing community. In particular, we follow the design philosophy and benchmarking principles exemplified by high-performance graph frameworks such as Networkit~\cite{networkit_ref}, which emphasizes minimal overhead, efficient memory access, and reproducible large-scale graph analytics.

Although we do not directly compare against Networkit’s implementations, citing it serves two purposes: (i) it establishes that our evaluation methodology is grounded in widely recognized standards for high-performance graph analysis, and (ii) it indicates that our C++ experimental setup is suitable for revealing the practical efficiency of relaxation-based single-source shortest path (SSSP) algorithms. Therefore, the reported results should be interpreted as reliable measurements obtained under conditions consistent with modern high-performance graph processing frameworks.

To validate JFR in large-scale scenarios, we conducted extensive randomized benchmarking across sparse and dense graphs. We performed comparative evaluations using both a standard default environment and an aggressive -O3 optimized environment, distinguishing between algorithmic logic overhead and implementation-level efficiency. Each graph was repeatedly generated and tested to obtain stable averages.

\begin{itemize}
    \item \textbf{Sparse\_XL, NegDense\_XL:} 3000 random instances averaged.
    \item \textbf{Windmill\_XL, SLF\_Killer\_XL:} 1500 structured/adversarial instances averaged.
\end{itemize}

Graph parameters:

\begin{itemize}
    \item \textbf{Sparse\_XL:} $N=20{,}000$-$70{,}000$, $M\approx 100{,}000$–$120{,}0000$, type: random
    \item \textbf{NegDense\_XL:} $N=2{,}000$-$5{,}000$, $M\approx 3{,}000{,}000$–$6{,}000{,}000$, type: negative random
    \item \textbf{Windmill\_XL:} $N=1{,}000$-$9{,}000$, type: windmill
    \item \textbf{SLF\_Killer\_XL:} $N=2{,}000$-$20{,}000$, type: SLF-killer
\end{itemize}

\begin{table}[h]
\centering
\caption{C++ Experiments (-O3 Optimized): JFR vs SPFA-SLF (Runtime and Relaxation Ops, Averaged over Large-Scale Tests)}
\label{tab:cpp_results_o3}
\resizebox{0.75\columnwidth}{!}{
\begin{tabular}{lcccc}
\toprule
Graph & Algorithm & Time (ms) & Ops & Check \\
\midrule
Sparse\_XL & JFR & 26.80 & 114,054 & PASS \\
           & SPFA-SLF          & 18.79 & 181,717 & PASS \\
\midrule
NegDense\_XL & JFR & 62.80 & 10,204,376 & PASS \\
             & SPFA-SLF          & 68.63 & 10,738,464 & PASS \\
\midrule
Windmill\_XL & JFR & 0.55  & 143,750 & PASS \\
             & SPFA-SLF          & 0.35  & 109,091 & PASS \\
\midrule
SLF\_Killer\_XL & JFR & 13.56 & 1,007,091 & PASS \\
                & SPFA-SLF          & 1,064.71 & 44,693,930 & PASS \\
\bottomrule
\end{tabular}}
\end{table}

Based on the large-scale benchmark results presented in Table \ref{tab:cpp_results_o3}, several key physical characteristics of the JFR framework emerge from the comparison with the SPFA-SLF baseline.

\subsection{Operational Suppression in Adversarial Topologies}
The most prominent observation is the drastic reduction of relaxation operations in the \texttt{SLF\_Killer\_XL} dataset. While SPFA-SLF executes approximately 44.69 million relaxations, JFR restricts the total count to just 1.01 million. This 97.7\% reduction in operations corresponds to a massive reduction in runtime from 1,064.71 ms down to 13.56 ms.

This data demonstrates that JFR acts as a structural stabilizer. In topologies designed to induce exponential oscillation in label-correcting algorithms, JFR maintains a near-linear operational scale, effectively preventing the "computational explosion" observed in the baseline.

\subsection{Consistency of Operational Count}
A critical observation across the four test categories is the stability of JFR's relaxation counts. Across \texttt{Sparse\_XL}, \texttt{Windmill\_XL}, and \texttt{SLF\_Killer\_XL}, JFR's operations stay within a relatively narrow range (approximately $1.1 \times 10^5$ to $1.0 \times 10^6$), whereas SPFA-SLF fluctuates wildly from $1.0 \times 10^5$ to over $4.4 \times 10^7$.

Even in the most complex \texttt{NegDense\_XL} scenario, JFR's operation count ($1.02 \times 10^7$) remains slightly lower than that of SPFA-SLF ($1.07 \times 10^7$). This stability indicates that JFR provides highly predictable performance, ensuring that the algorithm's workload is governed by the graph's fundamental reachability rather than its specific edge-weight distribution.

\subsection{Trade-offs in Low-Complexity Regimes}
The results for \texttt{Sparse\_XL} and \texttt{Windmill\_XL} reveal the inherent constant-factor overhead of the JFR framework. In these instances, JFR exhibits higher runtimes (26.80 ms and 0.55 ms) compared to SPFA-SLF (18.79 ms and 0.35 ms), despite JFR achieving a 38.16\% reduction in operations for the sparse case. 

This confirms that the priority-queue maintenance and frontier-filtering logic introduce a fixed computational cost. However, the data shows this trade-off is asymmetric: the marginal time penalty in simple graphs is negligible compared to the magnitude of time savings achieved in dense or adversarial environments.

\begin{table}[htbp]
\centering
\caption{Nonlinear Acceleration Validation (comparsion)}
\label{tab:nla-increment}
\begin{adjustbox}{width=\textwidth,center}
\begin{tabularx}{\linewidth}{@{} l c c c c @{}}
\toprule
\textbf{Instance} & \textbf{|V|} & \textbf{Edges (B)} & 
\textbf{JFR Time B [ms]} & \textbf{JFR Ops B} \\
\midrule
NegDense\_XL & 4538 & 5,182,231 & 270.26 & 27,425,775 \\
NegDense\_XL & 4538 & 5,682,231 & 190.55 & 13,797,216 \\
\bottomrule
\end{tabularx}
\end{adjustbox}
\end{table}

\noindent\textbf{Nonlinear Acceleration Phenomenon.}
Across the NegDense\_XL instance, a small edge increment (approximately
 +9.6\%) unexpectedly causes \emph{both} JFR runtime and relaxation
operations to decrease—sometimes by 25--50\%. This counterintuitive behavior
reveals a nonlinear acceleration effect intrinsic to JFR: when the vertex set
is fixed, additional edges can shift the graph into a more connectivity-rich
regime where Frontier Filtering becomes more aggressive and jump propagation
stabilizes earlier. As a result, multiple Bellman--Ford iterations collapse
into fewer Bounded Local Propagation Steps, sharply reducing redundant relaxations and total work.

While JFR may occasionally incur higher operation counts than SPFA-SLF on
sparser subgraphs—where limited connectivity restricts multi-hop jump
opportunities—its behavior reverses dramatically as edge density increases.
Once the graph provides sufficient propagation pathways, JFR transitions into a
high-efficiency mode in which its frontier jumps become highly effective,
yielding not only operation counts far below SPFA-SLF but also substantially
lower work compared to its own performance on the original, sparser graph.
This superlinear improvement with increasing connectivity highlights JFR’s
structural advantage: its efficiency is not merely tolerant of denser graphs,
but is \emph{amplified} by them, demonstrating robustness and scalability across
diverse topologies.

\noindent
\textbf{Observations:}

\begin{itemize}
    \item JFR significantly reduces relaxation operations (Ops) across all graph types, particularly in dense and adversarial graphs.
    \item Runtime improvements are substantial in adversarial cases (SLF\_Killer\_XL), confirming robustness.
    \item The large-scale randomized evaluation demonstrates correctness (PASS) and highlights JFR’s potential for high-performance scenarios.
\end{itemize}
\subsection{Quantitative Interpretation}

\begin{itemize}
    \item On sparse graphs, $\rho_{TPR}$ dominates, leading to modest slowdown.
    \item On moderately dense graphs, JFR begins to offset overhead through reduced relaxations.
    \item On dense graphs, $\rho_{ops} \approx 50$ and $\rho_{TPR} \approx 49$, reaching the 
    equilibrium region where JFR achieves comparable runtime.
    \item On adversarial (SLF-Killer) graphs, JFR enters its \textbf{robustness zone} with 
    $\rho_{ops} \gg \rho_{TPR}$, achieving over an order of magnitude speedup.
\end{itemize}
\subsection{Scalability and Extensibility Analysis}
\label{subsec:scalability}

To evaluate the scalability of the  JFR algorithm, we tested ultra-large negative-edge dense graphs beyond the original XL scale. Two instances were constructed:

\begin{itemize}
    \item \textbf{High-Density Negative Graphs-1:} $N=10{,}000$, $E=55{,}000{,}000$ edges
    \item \textbf{High-Density Negative Graphs-2:} $N=20{,}000$, $E=295{,}000{,}000$ edges
\end{itemize}

\begin{table}[htbp]
\centering
\caption{Scalability Benchmark: JFR vs SPFA-SLF(-O3)}
\label{tab:scalability_results}
\resizebox{0.9\textwidth}{!}{
\begin{tabular}{lcccc}
\toprule
\textbf{Graph} & \textbf{Algorithm} & \textbf{Time (ms)} & \textbf{Relaxation Ops} & \textbf{Check} \\
\midrule
NegDense\_Ultra-1 & SPFA-SLF & 1947.27 & 327,064,005 & PASS \\
                  & JFR & 383.51 & 68,869,589 & PASS \\
\midrule
NegDense\_Ultra-2 & SPFA-SLF & 7771.65 & 1,188,649,749 & PASS \\
                  & JFR & 7265.14 & 547,254,897 & PASS \\
\bottomrule
\end{tabular}
}
\end{table}
\begin{table}[htb]
\centering
\caption{Performance Comparison on Large-Scale Adversarial Graph ($N=500,000$,Default environment)}
\label{tab:performance_comparison_500k_en}
\resizebox{\textwidth}{!}{%
\begin{tabular}{l c c c c} 
\toprule
\textbf{Algorithm} & \textbf{Wall-Clock Time} & \textbf{Relaxations Count} & \textbf{Time Speedup} & \textbf{Relaxation Efficiency} \\
\midrule
SPFA-SLF & $\approx 42$ min ($2520$ s) & $93,295,674,368$ & $1.0\times$ & Base \\
JFR & $19,522.09$ ms ($\approx 19.5$ s) & $74,102,531$ & $\mathbf{\approx 130\times}$ & $\mathbf{\approx 1259\times}$ \\
\bottomrule
\end{tabular}
} 
\vspace{0.1cm}
\footnotesize{\textit{Note: Relaxation Efficiency is calculated as the ratio of SPFA-SLF relaxations to JFR relaxations ($\approx 1259.07\times$).}}
\end{table}
\subsubsection{Operational Efficiency Estimation}

To evaluate operational efficiency in a hardware-agnostic manner, we report the \textbf{Normalized Work Reduction (NWR)} as a metric representing the fraction of total relaxation operations relative to a baseline (SPFA-SLF). NWR provides a technical measure of potential energy or work reduction but does not correspond to actual physical energy measurements.

\begin{itemize}
    \item \textbf{NegDense\_Ultra-1:} JFR achieves approximately 24.1\% NWR relative to SPFA-SLF.  
    \item \textbf{NegDense\_Ultra-2:} JFR achieves approximately 46.0\% NWR relative to SPFA-SLF.
\end{itemize}

\paragraph{Wall-Clock Time Acceleration (Novel methodology):} 
On the challenging adversarial dataset featuring 500,000 nodes, the JFR framework reduced wall-clock runtime from 42 minutes to 19.5 seconds, demonstrating the effectiveness of its novel methodology in ultra-large-scale graphs.

\paragraph{Significantly Reduced Operational Count ($\sim 3\times$ Order of Magnitude):}
The core advantage of the JFR framework lies in its combinatorial operational efficiency. SPFA-SLF performed over 93 billion relaxation operations on this graph, whereas JFR executed only approximately 74.1 million relaxations. This represents a $\sim 1259\times$ reduction in effective operations, directly validating the key mechanisms:
\begin{itemize}
    \item \textbf{Jump Propagation:} Skips large portions of redundant relaxation steps via multi-hop bulk propagation.
    \item \textbf{Frontier Filtering:} Suppresses the growth of the active frontier, effectively improving observed runtime complexity toward practical near-$O(V)$ behavior.
\end{itemize}

\section{Discussion: Robustness, Limits, and Applicability}

\subsection{Constant-Factor Overhead}

Our evaluation confirms the main engineering tradeoff of JFR:  
the framework suppresses redundant relaxations at the cost of increased per-operation 
constant factors from priority-queue operations and jump propagation.
This effect is most visible on simple sparse graphs.

\subsection{Structural Robustness: JFR’s Applicability Zone}

The primary value of the JFR architecture is not general-case acceleration, but its 
\textbf{structural robustness}: on graph families where label-correcting methods approach 
worst-case behavior, JFR maintains stable and predictable performance by drastically 
reducing the relaxation workload.  
This property is crucial for applications requiring:
\begin{itemize}
    \item reliability under adversarial or degenerate topologies,
    \item predictable latency in large-scale systems,
    \item robustness in dense or negative-edge environments.
\end{itemize}
\subsection{Edge-Induced Nonlinear Acceleration}
\label{sec:acceleration}

A counter-intuitive finding of this study is the \textbf{non-monotonic performance behavior} observed in Section 4: specifically, small-scale edge increments (e.g., $\approx 10\%$) in dense graphs can trigger a substantial reduction ($>50\%$) in total relaxation operations. This phenomenon is structurally explainable through the interaction between graph connectivity and the JFR mechanism.

\paragraph{Shortcut Effect and Bulk Updates}
In the JFR framework, additional edges often function as \textit{topological shortcuts}. While classical algorithms (SPFA/BF) must relax these edges individually, increasing the linear workload, the \textit{Local Multi-Hop Propagation} mechanism utilizes these shortcuts to accelerate local convergence. Higher connectivity within the frontier's neighborhood increases the probability of discovering stable paths within the depth-limited window (see  \ref{appendix:mechanism}).

\paragraph{Mechanism}
Mathematically, the increased edge density effectively reduces the "diameter" of the local search space. This allows the algorithm to perform \textbf{bulk updates}—skipping intermediate relaxation steps for entire subgraphs—earlier in the execution. When the reduction in skipped operations ($\Delta N_{Jumped}$) exceeds the linear cost of scanning new edges ($\Delta |E|$), the algorithm enters a superlinear acceleration regime:
\begin{equation}
\Delta \text{Ops}_{total} \approx \Delta |E| - \Delta N_{Jumped} < 0
\end{equation}
This confirms that JFR transforms structural density from a computational liability into an asset for convergence speed.
\section{Conclusion and Future Work}
\subsection*{Key Advantage Highlight}

\noindent
\textbf{Distinctive Strength of JFR:} 
JFR demonstrates a pervasive reduction in relaxation workload over SPFA-SLF across a wide range of evaluated topologies. This advantage is particularly pronounced in dense and adversarial scenarios, where the framework curtails redundant relaxations by multiple orders of magnitude.

\noindent
\textbf{Practical Implication:} 
This advantage establishes JFR as a highly reliable framework for real-world SSSP computations in scenarios where graph density, negative edges, or adversarial structures would otherwise degrade the performance of classical label-correcting algorithms.

\noindent
\textbf{Conclusion.}
The Jump Frontier Relaxation (JFR) framework advances single-source shortest-path
computation by emphasizing robustness and operational efficiency. Its design focuses on
suppressing redundant relaxations through Frontier Filtering and multi-hop propagation,
resulting in significantly reduced operational counts across diverse graph structures.

\begin{itemize}
    \item \textbf{Robustness:}
    JFR maintains stable behavior on dense, sparse, and negative-edge graphs, avoiding
    the oscillatory queue dynamics frequently observed in classical SPFA-SLF.

    \item \textbf{Operational Efficiency:}
    As demonstrated in Table \ref{tab:cpp_results_o3}, the reduction in relaxation operations is highly dependent on the graph topology. While the percentage reduction may be limited in sparse or dense regimes—and a proportional increase is observed in structured environments like \texttt{Windmill\_XL}—the absolute reduction in the total number of operations remains significant as the scale of the graph ($N$ and $M$) increases.

    This characteristic proves that JFR is particularly advantageous for large-scale networks, where even a marginal percentage gain translates into the elimination of millions of redundant relaxations. By suppressing the combinatorial workload, JFR effectively prevents the algorithms from approaching their theoretical $O(VE)$ worst-case complexity in adversarial or large-scale real-world scenarios.

    \item \textbf{Normalized Work Reduction (NWR).}
Lower operation counts imply a reduction in total computational work. Prior studies show that such 
reductions correlate with lower memory traffic and improved cache behavior. This connection has also 
been observed in empirical energy studies of shortest-path algorithms; for example, Alamoudi and 
Al-Hashimi~\cite{energy2024} report that reduced operation counts in Bellman--Ford variants directly 
translate to smoother memory-access patterns and more energy-efficient execution. This suggests that 
JFR’s work reductions may offer benefits in energy- or resource-constrained environments, even without 
assuming a specific physical energy model.

\end{itemize}

\vspace{1em}

\subsection*{Future Work}

Several directions may further extend the JFR framework:

\begin{itemize}
    \item \textbf{High-Performance Queue Structures:}
    Integrating bucket-based update orderings, radix heaps, or multi-level buckets to reduce
    $O(\log N)$ overhead on integer-weighted graphs.

    \item \textbf{Adaptive Frontier Granularity:}
    Dynamically adjusting frontier size based on local graph density or weight distribution,
    enabling JFR to reduce its constant-factor overhead on sparse or well-behaved graphs.

    \item \textbf{Hybrid Scheduling and Cache-Aware Design:}
    Incorporating graph partitioning, memory-locality-aware relaxations, and cache-focused
    scheduling to reduce machine-level overhead.

    \item \textbf{Parallel and GPU Variants:}
    Exploring frontier-level parallelism on multi-core CPUs and massively parallel GPUs,
    especially for large-scale dense or negative-edge workloads.

    \item \textbf{Approximate or Probabilistic Extensions:}
    Introducing controlled approximation for extremely large graphs where exact distances are
    not strictly required, potentially enabling substantial additional reductions in work.
\end{itemize}

\vspace{1em}

\subsection*{Future Industrial Applications}

Although JFR is primarily motivated by theoretical and algorithmic concerns, its
combination of robustness and reduced operational footprint suggests several potential
application domains:

\begin{itemize}
    \item \textbf{Large-Scale Network Routing:}
    Efficient shortest-path updates in dense telecom and data-center networks.

    \item \textbf{Financial and Risk Analysis:}
    Handling negative-edge or irregular transaction graphs with predictable performance.

    \item \textbf{Logistics and Transportation:}
    Accelerated routing in dense transportation networks and dynamic scheduling systems.

    \item \textbf{Embedded or Resource-Constrained Systems:}
    Systems where reduced computational work directly improves longevity or responsiveness.

    \item \textbf{Dynamic or Real-Time Environments:}
    Rapid recalculation of distances under frequently changing weights, such as
    traffic navigation or adaptive grid systems.
\end{itemize}

\appendix
\section*{Appendix A: Amortized Proof of Theorems~\ref{thm:amortized_edges} and \ref{thm:amortized_running_time}}
\label{append:amortized_proof}

This appendix provides the rigorous derivation of the amortized bounds stated in the main text, specifically relying on the \emph{$k$-Bounded Local Multi-Hop Property} (Definition~\ref{def:k-bounded_main}).

Recall the notation: $n=|V|$, $m=|E|$. Let $s(v)$ denote the total number of times vertex $v$ is added to the active frontier $\mathcal{F}$ (activations). Let $D_v$ be the number of strict distance improvements for vertex $v$ throughout the execution ($D_v \le n-1$).

\subsection*{A.1 Edge-Inspection Decomposition}

\begin{lemma}
\label{lem:edge_decomp}
The total number of inspected frontier edges is exactly the sum of the out-degrees of activated vertices:
\[
\sum_{t\ge 1} |E_F^{(t)}| = \sum_{v\in V} s(v) \deg(v).
\]
\end{lemma}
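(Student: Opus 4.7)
The plan is a direct double-counting identity, executed as a single interchange of summation. The statement is an equality rather than an inequality, so no slack has to be tracked; the whole proof reduces to being explicit about the bookkeeping of $E_F^{(t)}$ and of the activation counter $s(v)$.

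First, I would fix an outer iteration $t$ and unpack the definition of $E_F^{(t)}$. By the Frontier Restriction underlying JFR (the same invariant used in Lemma~\ref{lem:frontier_sufficiency}), the only edges inspected during the Frontier Relaxation step of iteration $t$ are those whose tail belongs to $F^{(t)}$, and each such edge is inspected exactly once per iteration. Grouping these edges by their tail vertex yields
\[
|E_F^{(t)}| \;=\; \sum_{u\in F^{(t)}} \deg(u).
\]
I would explicitly flag that edges inspected inside the $k$-LMH subroutine are \emph{not} counted in $E_F^{(t)}$; those are charged separately to $C_{\mathrm{LMH}}(t)$ in Assumption~\ref{asm:jp_cost_revised}, so no double counting arises.

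Second, I would sum the above identity over $t\ge 1$ and swap the order of summation, viewing the double sum as a sum of nonnegative integers indexed by pairs $(t,v)$ with $v\in F^{(t)}$:
\[
\sum_{t\ge 1} |E_F^{(t)}| \;=\; \sum_{t\ge 1}\sum_{u\in F^{(t)}}\deg(u) \;=\; \sum_{v\in V} \deg(v)\cdot \bigl|\{\,t\ge 1 : v\in F^{(t)}\,\}\bigr|.
\]
By the definition of $s(v)$ as the total number of activations of $v$, the cardinality appearing in the last factor is exactly $s(v)$, which gives the claimed identity.

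There is essentially no analytical obstacle here; the only delicate points are definitional. I would make sure to verify (i) that within a single iteration the set of inspected edges is precisely the disjoint union of the out-neighborhoods of the vertices in $F^{(t)}$, with no repeated inspection of the same edge, and (ii) that the counter $s(v)$ coincides with $|\{t : v\in F^{(t)}\}|$, i.e., each appearance of $v$ in the frontier is counted as one activation. Both are immediate from the JFR specification in Section~\ref{sec:algo_overview}, after which the lemma follows without further work.
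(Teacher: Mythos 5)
Your proof is correct and follows essentially the same double-counting argument as the paper, which simply notes that each activation of $v$ causes all of $E_{\text{out}}(v)$ to be inspected and then sums over activation events. Your version is just more explicit about the interchange of summation and about excluding LMH-internal inspections from $E_F^{(t)}$, which the paper leaves implicit.
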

\begin{proof}
In each iteration $t$, if $v \in F^{(t)}$, all edges $(v, u) \in E_{\text{out}}(v)$ are inspected. Summing over all $t$ is equivalent to summing over all activation events for each vertex.
\end{proof}

\subsection*{A.2 Proof of Theorem~\ref{thm:amortized_edges} (The Role of $k$-Bounded Propagation)}

\begin{lemma}[$k \Rightarrow \tau \ge k$]
\label{lem:k-implies-tau}
If $\text{LMH}$ is $k$-bounded and a vertex $v$ is stabilized by $\text{LMH}$ in iteration $t$ (i.e., no strict improvement is available via $\le k$-length paths inside $N_k(F(t))$), then any strict distance improvement to $v$ originating from outside the $k$-hop neighborhood $N_k(F(t))$ requires at least $k$ outer iterations to propagate to $v$. Consequently, the observed stability window $\tau$ satisfies $\tau \ge k$.
\end{lemma}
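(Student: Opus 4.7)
The plan is to argue by contradiction, converting the topological gap between $v$ and any external trigger into an iteration-count lower bound. By Definition~\ref{def:k-bounded_main}, the stabilization hypothesis says that every $s \to v$ walk of at most $k$ edges whose vertices all lie in $N_k(F^{(t)})$ has already been exploited by LMH and admits no further improvement. I would then suppose that a strict improvement nevertheless reaches $v$ at some iteration $t+j$ with $1 \le j < k$ and trace the relaxation chain responsible backwards through the JFR execution; since no improving chain confined to $N_k(F^{(t)})$ and of length $\le k$ can exist by hypothesis, at least one vertex $u^{\ast}$ on the chain must lie outside $N_k(F^{(t)})$.

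Next I would extract the geometric content: $u^{\ast} \notin N_k(F^{(t)})$ means the graph distance from $u^{\ast}$ to $F^{(t)}$ exceeds $k$, so the suffix of the chain from $u^{\ast}$ to $v$ contains at least $k+1$ edges. The core step is a rate-limitation argument along this external suffix. Although $k$-bounded LMH can collapse up to $k$ hops within whatever neighborhood is currently centered on the frontier, the portion of the chain that lies outside $N_k(F^{(t)})$ is absorbed only one boundary layer per outer iteration, because a vertex becomes eligible for LMH-accelerated propagation only after a preceding vertex on the chain has itself entered the frontier in the previous iteration. I would formalize this by induction on $r \in \{0,1,\dots,k-1\}$, showing that at iteration $t+r$ only vertices within $r$ graph-hops of the boundary $\partial N_k(F^{(t)})$ can carry an improvement attributable to $u^{\ast}$; applied to the $\ge k+1$-edge suffix this produces a contradiction with $j<k$ and hence $\tau \ge k$.

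The main obstacle is making the rate-limitation step airtight in the presence of the dynamically evolving frontiers $F^{(t+1)}, F^{(t+2)}, \dots$, whose LMH windows $N_k(F^{(\cdot)})$ can shift and overlap in complicated ways; a naive reading of ``LMH propagates $k$ hops per iteration'' would wrongly suggest that two outer iterations suffice to reach $v$. The right remedy is to freeze the coordinate frame at iteration $t$ and reason with the static geometry of $N_k(F^{(t)})$: even if a later LMH step acts at depth $k$, it is anchored to a frontier vertex that was itself activated only one hop closer to the exterior in the preceding iteration, so it cannot reach any vertex beyond the inductive wavefront. Once this lockstep invariant is in place, the remainder of the proof reduces to routine edge-counting along the external suffix.
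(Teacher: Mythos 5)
Your overall strategy is the same as the paper's: an improvement originating outside $N_k(F^{(t)})$ must traverse at least $k{+}1$ edges to reach $v$, and the propagation rate across this gap is limited to one hop per outer iteration, forcing $\tau \ge k$. The paper's proof simply \emph{asserts} the one-hop-per-iteration rate (``each of which advances the affected region by at most one hop''); you correctly identify this as the crux. One preliminary point: your edge count (``the suffix from $u^{\ast}$ to $v$ contains at least $k{+}1$ edges'') requires $v \in F^{(t)}$, not merely $v \in N_k(F^{(t)})$. This does hold under the paper's semantics (Appendix B stabilizes and contracts only frontier vertices), but you must state it, since for a $v$ on the outer boundary of $N_k(F^{(t)})$ the count collapses to a single edge and the conclusion would fail.

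The genuine gap is in the rate-limitation step. Your justification of the lockstep invariant --- that a later LMH step ``is anchored to a frontier vertex that was itself activated only one hop closer to the exterior in the preceding iteration'' --- presupposes exactly the one-hop-per-iteration advance the induction is supposed to establish, so the argument is circular. Worse, it is in tension with Definition~\ref{def:k-bounded_main} and the Appendix~B mechanism: $F^{(t+r+1)}$ contains \emph{every} vertex whose estimate decreased in iteration $t{+}r{+}1$, including vertices improved by LMH itself, and the next LMH invocation acts on the full $k$-hop neighborhood of the \emph{current} frontier. Concretely, once the external improvement activates a single vertex $b$ in the boundary layer of $N_k(F^{(t)})$, the very next LMH pass is entitled to relax all $\le k$-length paths from $b$, which can carry the improvement the remaining $k$ hops to $v$ within one outer iteration --- defeating your claimed $r$-hop wavefront already at $r=2$. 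Closing this hole requires an additional property of LMH that neither you nor the paper makes explicit (for instance, that a vertex improved in iteration $t'$ cannot act as an LMH source until iteration $t'{+}1$, together with a bound on how far a single LMH pass re-propagates into an already-stabilized region). As written, your induction does not go through; you have reproduced the paper's unproved assertion rather than supplied the missing argument.
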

\begin{proof}
Any external strict improvement must traverse at least one edge to enter the $k$-hop neighborhood $N_k(F(t))$. Since $k$-bounded $\text{LMH}$ exhaustively stabilizes all $\le k$-length paths within this region, the propagation of the external effect must proceed via outer-iteration frontier expansions, each of which advances the affected region by at most one hop. Thus, at least $k$ outer iterations are required before the external update can reach $v$.
\end{proof}

\begin{proof}[Proof of Theorem~\ref{thm:amortized_edges}]
Consider a vertex $v$. Under the $k$-Bounded Local Multi-Hop Property (Definition~\ref{def:k-bounded_main}) and the resulting stability lower bound $\tau \ge k$ (Lemma~\ref{lem:k-implies-tau}), a vertex $v$ can be strictly improved at most once every $k$ outer iterations. It is only re-inserted (re-activated) upon a new strict distance decrease.

This guarantees that for every block of $k$ outer iterations, there must be at most one strict distance improvement $D_v$ that activates $v$. Since the initial activation is separate, the total number of activation phases $s(v)$ is bounded by:
\[
s(v) \le 1 + \left\lceil \frac{D_v}{k} \right\rceil.
\]
Substituting this new bound (using the algorithm parameter $k$ instead of the observable $\tau$) into Lemma~\ref{lem:edge_decomp}:
\[
\sum_{t} |E_F^{(t)}| = \sum_{v\in V} s(v) \deg(v) \le \sum_{v\in V} \left(1 + \frac{D_v}{k}\right) \deg(v).
\]
Expanding and applying the definitions $m = \sum_v \deg(v)$ and $n = |V|$:
\[
\sum_{t} |E_F^{(t)}| \le \sum_{v\in V} \deg(v) + \frac{1}{k} \sum_{v\in V} D_v \deg(v).
\]
This total inspection count is bounded by $O\left(n + m \cdot \frac{\bar{D}}{k} \right)$.
\end{proof}

\subsection*{A.3 Proof of Theorem~\ref{thm:amortized_running_time}}

\begin{proof}
The total running time $T_{\text{total}}$ is the sum of the edge relaxation cost and the total overhead of $k$-Bounded Local Multi-Hop ($\text{LMH}$) Propagation steps across all outer iterations $T$.
Let $C_{\mathrm{LMH}}(t)$ denote the computational cost of the $\text{LMH}$ propagation step in iteration $t$, which is bounded by Assumption~\ref{asm:jp_cost_revised}.

\[
T_{\text{total}}
= \sum_{t=1}^{T}
\left(
    c_{\text{relax}} \cdot |E_F^{(t)}|
    + C_{\mathrm{LMH}}(t)
\right).
\]

Substituting the edge inspection bound from Theorem~\ref{thm:amortized_edges} for the edge relaxation term:
\[
\sum_{t} c_{\text{relax}} \cdot |E_F^{(t)}| \le O\left( n + m \cdot \frac{\bar{D}}{k} \right).
\]
The total time then satisfies:
\[
T_{\mathrm{total}}
= O\!\left(
    n
    + m \cdot \frac{\bar{D}}{k}
    + \sum_{t} C_{\mathrm{LMH}}(t)
\right).
\]
This revised bound explicitly shows that the overall speedup depends on two competing factors: the savings factor $1/k$ applied to the classical relaxation term, and the accumulated overhead cost $\sum_{t} C_{\mathrm{LMH}}(t)$, which is proportional to $k$ (Assumption~\ref{asm:jp_cost_revised}). For a net speedup, the benefit from the $1/k$ reduction must outweigh the cost term.
\end{proof}
\section*{Appendix B: General Framework of JFR Strategy}
\label{sec:appendix_alg}

This appendix presents the general algorithmic framework for the JFR (Jump Frontier Relaxation with Frontier Filtering) strategy. The pseudocode details the core logical flow and component interactions. Implementation-specific factors, such as the exact ordering rule for the priority structure $\mathcal{Q}$, the topological strategy for Jump Propagation, and the precise Frontier Filtering threshold, are abstracted to maintain the framework's theoretical generality and focus on the fundamental algorithmic contribution.

\begin{algorithm}[htb!]
\caption{General Framework of JFR Strategy}
\label{alg:jfr_abstract}
\begin{algorithmic}[1]

\Require Graph $G=(V,E,w)$, source $s$, depth parameter $k$
\Ensure Distance vector $\mathbf{d}$, Parent pointers $\pi$

\State \textbf{Initialize:} $\mathbf{d}[v]\gets +\infty,\ \pi[v]\gets\text{NIL}$ for all $v\in V$; $\mathbf{d}[s]\gets 0$
\State \textbf{Initialize:} Frontier $\mathcal{F}\gets\{s\}$, Priority Structure $\mathcal{Q}\gets\{s\}$
\State \textbf{Initialize:} Auxiliary metadata $\mathbf{aux}$ \Comment{Tracks update history and local stability}

\While{$\mathcal{F}\neq\emptyset$}
  \State $u \gets \mathcal{Q}.\text{select next active}()$ \Comment{Selection based on priority metric}

  \If{\Call{MeetsStabilityCriterion}{$u$, $\mathbf{aux}$}}
    \Comment{Checks if node is locally stable (implies $\tau \ge k$, see Lemma~\ref{lem:k-implies-tau})}
    \State \Call{Local Multi-Hop Propagation}{$u$, $\mathbf{d}$, $\pi$, $\mathcal{F}, k$}
    \Comment{Performs $k$-bounded updates per Definition~\ref{def:k-bounded_main}}
  \EndIf

  \ForAll{$(u,v)\in E_{\text{out}}(u)$}
    \State $d_{\text{new}} \gets \mathbf{d}[u] + w(u,v)$
    \If{$d_{\text{new}} < \mathbf{d}[v]$}
      \State $\mathbf{d}[v]\gets d_{\text{new}}$; $\pi[v]\gets u$
      \State \Call{UpdateMetadata}{$v$, $\mathbf{aux}$}
      \If{$v\notin\mathcal{F}$}
        \State $\mathcal{F}.\text{insert}(v)$; $\mathcal{Q}.\text{insert}(v)$
      \Else
        \State $\mathcal{Q}.\text{decrease\_key}(v)$
      \EndIf
    \EndIf
  \EndFor

  \If{\Call{EvaluateFilteringCondition}{$\mathcal{F}$, $\mathbf{aux}$}}
    \Comment{Adaptive criterion based on frontier density and $k$}
    \State \Call{FilterStableVertices}{$\mathcal{F}$, $\mathcal{Q}$}
    \Comment{Prunes redundant nodes leveraging $k$-stability}
  \EndIf

\EndWhile

\State \Return $\mathbf{d}, \pi$

\end{algorithmic}
\end{algorithm}

\subsection{Mechanism Rationale: Bounded Local Propagation}
\label{appendix:mechanism}

To theoretically justify \textbf{Assumption~\ref{asm:jp_cost_revised}} ($k$-LMH Cost Bound) and the mechanisms described in \textbf{Definition~\ref{def:k-bounded_main}} ($k$-Bounded Local Multi-Hop Propagation) without loss of generality, we describe the logical control flow of the \textit{Local Multi-Hop Propagation} mechanism utilized in our implementation.

\paragraph{Bounded Depth Constraint}
The \textit{Local Multi-Hop Propagation} procedure is a \textbf{depth-limited local relaxation process} applied on the induced subgraph $G[N_k(F^{(t)})]$, where $N_k(F^{(t)})$ denotes the $k$-hop neighborhood of the active frontier. By limiting the propagation depth to a small constant $k$ (or a heuristic bound derived from structural indicators), the computational cost at iteration $t$ is tightly controlled by the local topology around the frontier.
Consistent with Assumption~\ref{asm:jp_cost_revised}, the cost is bounded by the size of the local neighborhood and the depth parameter $k$:
\begin{equation}
C_{\mathrm{LMH}}(t) = \mathcal{O}\Bigg( k \cdot \sum_{v \in N_k(F^{(t)})} \mathrm{deg}(v) \Bigg).
\end{equation}
This bounded-depth design ensures that each iteration remains efficient. The cost remains proportional to the \textbf{local frontier neighborhood size} rather than the global graph size $|E|$, satisfying the conditions for the cost-benefit tradeoff derived in Theorem~\ref{thm:amortized_running_time}.

\paragraph{Ensuring Stability}
The mechanism achieves the stability guarantees proved in \textbf{Lemma~\ref{lem:k-implies-tau}} ($\tau \ge k$) by enforcing \textit{local convergence} within the depth-limited region before releasing vertices. Specifically, a vertex $v$ is only removed from the frontier (contracted) after the local relaxation process stabilizes its distance value against all paths of length $\le k$ within the window. Consequently, $d(v)$ cannot be improved again until a relaxation wave propagates from outside this local window, effectively guaranteeing stability for at least $\tau \ge k$ subsequent outer iterations.


\clearpage

\end{document}